\tikzstyle{vertex}=[circle,draw=black,fill=white,minimum size=20pt,inner sep=0pt]
\tikzstyle{oedge} = [draw,thick,->]
\tikzstyle{edge} = [draw,thick]
\tikzstyle{selected edge} = [draw,line width=5pt,-,red!50]
\tikzstyle{weight} = [font=\small]
\newtheorem{theorem}{Theorem}
\newtheorem{lemma}[theorem]{Lemma}
\newtheorem{definition}[theorem]{Definition}
\newtheorem{observation}[theorem]{Observation}
\def\NPh{\textsf{NP}-hard}
\def\NP{\textsf{NP}}
\def\FPT{\textsf{FPT}}
\def\W{\textsf{W}}
\def\NPc{\textsf{NP}-complete}
\def\N{\mathbf{N}}
\def\to{\rightarrow}
\def\ss{\subseteq}
\def\x{\mathbf{x}}
\def\permpat{{\sc Permutation Pattern Matching}} 
\def\clique{{\sc Clique}} 
\def\language{\Sigma^*} 
\def\assumptioncoNP{$\mbox{\NP} \not\subseteq \mbox{co-\NP}/\mbox{poly}$}
\def\th{\textsuperscript{th}}
\def\O{\mathcal{O}}
\newcommand\term[1]{{\em #1}}
\begin{document}
\title{Kernelization lower bound for Permutation Pattern Matching\footnote{Partially supported by Warsaw Center of Mathematics and Computer Science.}}
\author{Ivan Bliznets\footnote{St.~Petersburg Department of Steklov Institute of Mathematics. E-mail: \texttt{iabliznets@gmail.com}.  Partially supported by the Government of the Russian Federation (grant 14.Z50.31.0030).}, Marek Cygan\footnote{Institute of Informatics, University of Warsaw, Poland. E-mail: \texttt{cygan@mimuw.edu.pl}. Partially supported by NCN grant DEC-2012/05/D/ST6/03214.}, Pawe\l\ Komosa\footnote{Institute of Informatics, University of Warsaw, Poland. E-mail: \texttt{kompaw01@gmail.com}.}, Luk\'a\v s Mach\footnote{DIMAP and Department of Computer Science, University of Warwick, United Kingdom. E-mail: \texttt{lukas.mach@gmail.com}. This author has also recieved funding from the European Research Council under the European Union's Seventh Framework Programme (FP7/2007-2013)/ERC grant agreement no.~259385.}} 
\maketitle 

\begin{abstract}
A permutation $\pi$ contains a permutation $\sigma$ as a pattern if it contains a subsequence of length $|\sigma|$ whose elements are in the same relative order as in the permutation $\sigma$. 
This notion plays a major role in enumerative combinatorics.
We prove that the problem does not have a polynomial kernel (under the widely believed complexity assumption \assumptioncoNP) by introducing a new polynomial reduction from the clique problem to permutation pattern matching. 
\end{abstract}

\section{Introduction} 

Counting permutations of size $n$ avoiding a fixed pattern is an established and active area of enumerative combinatorics. 
Knuth \cite{K} has shown that the number of permutations avoiding $(2, 3, 1)$ is the $n$\th\ Catalan number. 
Various choices of prohibited patterns have been studied among others by Lov\' asz \cite{L}, Rotem \cite{R}, and Simion and Schmid \cite{SS}.
This culminated in the Stanley-Wilf conjecture stating that for every fixed prohibited pattern, the number of permutations of length $n$ avoiding it can be bounded by $c^n$ for some constant $c$. 
Klazar \cite{Kstanley} reduced the question to the F\" uredi-Hajnal conjecture, which was ultimately proved by Marcus and Tardos in 2004 \cite{MT}. 

Wilf \cite{W} also posed the algorithmic question of whether detecting a given pattern (of length $\ell$) in a given permutation (of length $n$) can be done in subexponential time. 
Subsequently, the problem was shown to be \NPh\ in \cite{Lnp}.
Ahal and Rabinovich have obtained an $\O(n^{0.47\ell + o(\ell)})$ time algorithm \cite{A}.
Fast algorithms have been found for certain restricted versions of the problem \cite{S, I}.

Pattern matching has also received interest in the context of parameterized complexity. 
Several groups of researchers have obtained \W[1]-hardness results for generalizations of the problem \cite{Glin, M}.
In \cite{M2} it was shown that the problem is in \FPT\ when parameterized by the number of runs (maximal monotonic consecutive subsequences) in the target permutation.
The authors of \cite{M2} raise the issue of whether their problem has a polynomial size kernel as an open problem. 
The central question of whether the problem is in \FPT\ when parameterized by $\ell$ has been resolved by Guillemot and Marx \cite{Glin}, who obtained an algorithm with runtime of $2^{\O(\ell^2 \log \ell)} \cdot n$. 
This implies the existence of a kernel for the problem. 
Obtaining kernel size lower bounds was posed as an open question during a plenary talk at {\em Permutation Patterns 2013} by St\' ephane~Vialette. 

We prove that the permutation pattern problem under the standard parameterization by $\ell$ does not have a polynomial size kernel, assuming \assumptioncoNP.

\section{Preliminaries} 
The set $\{i, i+1, \ldots, j-1, j\}$ is denoted by $[i, j]$. 
We set $[n] := [1, n]$. 
A permutation $\pi$ is a bijection from $[n]$ to $[n]$.
The value $\pi(i)$ is called \term{the entry of $\pi$ at position $i$}.
We use $|\pi|$ to denote the size of the domain of $\pi$.
Two common representations of a permutation $\pi$ are used:
the vector $(\pi(1), \pi(2), \ldots, \pi(n))$ and the corresponding permutation matrix.
The latter is a $|\pi|\times|\pi|$ binary matrix with 1-entries precisely on coordinates $(\pi(i), i)$. 
A vector obtained from the vector representation by omitting some entries is \term{a subsequence of the permutation}. 
Such subsequence is \term{a consecutive subsequence} if it contains precisely the entries with indexes from $[i, j]$ for some $i, j \in \N$.
We use $\pi[i, j]$ to denote the set of entries $\{\pi(i), \pi(i + 1), \ldots, \pi(j)\}$.
\term{A monotonic subsequence} is a subsequence whose entries form a monotonic sequence. 
\term{A run} is a maximal monotonic consecutive subsequence. 
For example, $(4, 5, 3, 1, 2)$ contains a (decreasing) run of length~$3$. 

The key notion of a permutation pattern is introduced below:
\begin{definition}
A permutation $\sigma$ on the set $[l]$ is \term{a pattern of a permutation $\pi$} on the set $[n]$ if there exists an increasing function $\phi : [l] \to [n]$ such that
$$\forall x, y \in [l] : \sigma(x) < \sigma(y) \text{ if and only if } \pi(\phi(x)) < \pi(\phi(y)).$$
We say that the function $\phi$ certifies the pattern. 
\end{definition}
A parameterized problem is a language $Q \ss \Sigma^*\times\N$, where $\Sigma$ is a fixed alphabet. 
The value $k$ of the instance $(x, k) \in Q$ is its \term{parameter}. 
\permpat\ is the following parameterized algorithmic problem:
\begin{framed}
\noindent
\textbf{Input:} a permutation $\sigma$ on $[\ell]$, a permutation $\pi$ on $[n]$. \\
\textbf{Parameter:} $\ell$. \\
\textbf{Question:} is $\sigma$ a pattern of $\pi$?
\end{framed}
A parameterized problem $Q$ is in \FPT\ if there is an algorithm deciding $(x, k) \in Q$ in time $f(k)|x|^{\O(1)}$, where $f$ is a computable function.
\begin{definition}
\term{A kernelization algorithm for a parameterized problem $Q$} is an algorithm that given an instance $(x, k) \in \Sigma^* \times \N$ produces in $p(|x| + k)$ steps an instance $(x', k')$ such that
\begin{enumerate}
\item $(x, k) \in Q \Leftrightarrow (x', k') \in Q$ and 
\item $|x'|, k' \le f(k)$, 
\end{enumerate} 
where $p(\cdot)$ is a polynomial and $f(\cdot)$ a computable function. 
\end{definition}
If there is a kernelization algorithm for $Q$, we say that \term{$Q$ has a kernel}. 
If the function $f(\cdot)$ in the above definition can be bounded by a polynomial, we say that \term{$Q$ has a polynomial kernel}. 

We utilize the standard machinery of Bodlaender et al. \cite{Bcross}, which builds on \cite{Cross1, Cross2}, to derive kernelization lower-bounds under the widely believed complexity assumption \assumptioncoNP.
A failure of this assumption would imply that the polynomial hierarchy collapses to the third level. 
Some basic definitions are necessary:
\begin{definition}[Bodlaender et al. \cite{Bcross}]
An equivalence relation $R$ on $\language$ is called \term{a polynomial equivalence relation} if the following two conditions hold:
\begin{enumerate}
\item There is an algorithm that given two strings $x, y \in \language$ decides whether $x$ and $y$ belong to the same equivalence class in $(|x| + |y|)^{\O(1)}$ time.
\item For any finite set $S \in \language$ the equivalence relation $R$ partitions the elements of $S$ into at most $(\max_{x \in S}|x|)^{\O(1)}$ equivalence classes.
\end{enumerate}
\end{definition}
An example of such a relation is the grouping of instances of the same size.
\begin{definition}[Bodlaender et al. \cite{Bcross}]
Let $L \ss \language$ be a language and let $Q \ss \language \times \N$ be a parameterized problem.
We say that \term{$L$ cross-composes into $Q$} if there is a polynomial equivalence relation $R$ and an algorithm which, given $t$ strings $x_1, x_2, \ldots, x_t$ belonging to the same equivalence class of $R$, computes an instance $(x^*, k^*) \in \language \times \N$ in time polynomial in $\sum_{i=1}^t |x_i|$
such that:
\begin{enumerate}
\item $(x^*, k^*) \in Q \Leftrightarrow x_i \in L$ for some $1 \le i \le t,$
\item $k^*$ is bounded by a polynomial in $\max_{i=1}^t |x_i| + \log t.$
\end{enumerate}
\end{definition}
\begin{theorem}[Bodlaender et al. \cite{Bcross}]
\label{t.cross} 
Let $L \ss \Sigma^*$ be an \NP-hard language.
If $L$ cross-composes into a parameterized problem $Q$ and $Q$ has a polynomial kernel then $\mbox{\NP} \subseteq \mbox{co-\NP}/\mbox{poly}$.
\end{theorem}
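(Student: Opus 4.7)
The plan is to argue by contradiction through the Fortnow--Santhanam OR-distillation impossibility. Suppose $Q$ admits a polynomial kernelization $\mathcal{K}$ producing kernels of size at most $k^c$ for some constant $c$. I would combine the cross-composition algorithm with $\mathcal{K}$ into a single polynomial-time procedure that compresses the logical OR of many $L$-instances into a short artifact; since $L$ is $\NPh$, the existence of such a compression is exactly what is ruled out under \assumptioncoNP.

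Concretely, given $t$ strings $x_1, \ldots, x_t \in \language$ of length at most $n$, I would first partition them according to the polynomial equivalence relation $R$ supplied by the cross-composition; condition~2 of the definition of $R$ bounds the number of classes by $n^d$ for some constant $d$. For each nonempty class $C_j$, I would run the cross-composition algorithm on its members to produce $(x^*_j, k^*_j) \in \language \times \N$ with $k^*_j \le (n + \log t)^e$ and the guarantee that $(x^*_j, k^*_j) \in Q$ iff some $x_i \in C_j$ belongs to $L$. Applying $\mathcal{K}$ yields kernels $(y^*_j, k''_j)$ of size $(n + \log t)^{O(1)}$. Concatenating the at most $n^d$ resulting kernels into a single string produces an artifact of total bit-length $n^{O(1)} \cdot (n + \log t)^{O(1)}$ whose membership in a natural union-language is equivalent to $x_i \in L$ for some $i$.

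Choosing $t$ as large as $2^{n}$ keeps this output size polynomial in $n$, which is exactly the regime in which the impossibility result bites: such a compression of an $\NPh$ language forces $\NP \subseteq \mbox{co-\NP}/\mbox{poly}$. The main obstacle I expect is bridging the gap between the \emph{class-local} OR guaranteed by cross-composition and the \emph{global single-instance} OR required by the original Fortnow--Santhanam theorem. The cleanest way to overcome this is either to package the $n^d$ kernels into one instance of an auxiliary union-language (which remains in $\NP$ when $Q$ does) or, more transparently, to appeal to the multi-instance distillation framework of Dell and van Melkebeek, which is tailored precisely to this situation. Either route converts the bundle of small kernels into the desired contradiction, finishing the proof.
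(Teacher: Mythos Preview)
The paper does not prove Theorem~\ref{t.cross} at all: it is quoted verbatim from Bodlaender, Jansen, and Kratsch~\cite{Bcross} and used as a black box in the proof of Theorem~\ref{t.main}. There is therefore no ``paper's own proof'' to compare your proposal against.

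That said, your sketch is a faithful outline of how the argument in~\cite{Bcross} actually proceeds: partition the inputs by the polynomial equivalence relation, cross-compose within each class, kernelize each output, and bundle the polynomially many small kernels into a single short witness whose YES/NO status encodes the OR of the original instances; then invoke the Fortnow--Santhanam distillation barrier~\cite{Cross1}. Your identification of the one nontrivial step---reconciling the per-class OR with the global OR needed for the distillation lower bound---is accurate, and both remedies you mention (packaging into an auxiliary \NP\ language, or appealing to the Dell--van~Melkebeek refinement) are standard and correct. The remark about ``choosing $t$ as large as $2^n$'' is slightly loose as phrased, since in the distillation framework $t$ is given rather than chosen; the precise mechanism is that one may pad the input instances so that $\log t \le n$, after which the $(n+\log t)^{O(1)}$ bound becomes $n^{O(1)}$ as required. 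With that caveat, the proposal is correct, but it is supplying a proof the present paper deliberately omits.
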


\section{Kernelization lower bound for \permpat}

We show that the \permpat\ problem is unlikely to have a polynomial kernel: 

\begin{theorem}
\label{t.main}
Unless $\mbox{\NP} \subseteq \mbox{co-\NP}/\mbox{poly}$, the \permpat\ problem does not have a polynomial kernel.
\end{theorem}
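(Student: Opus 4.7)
The plan is to apply Theorem~\ref{t.cross} by constructing a cross-composition from \clique\ into \permpat. I would take the polynomial equivalence relation to group instances by the pair $(n, k)$, where $n$ is the number of vertices and $k$ the target clique size; padding with trivial no-instances lets us assume the number $t$ of instances is a power of two. Given $t$ such graphs $G_1, \ldots, G_t$, the goal is to produce a pattern $\sigma$ of length $\ell = \mathrm{poly}(n, k, \log t)$ and a permutation $\pi$ of polynomial total size such that $\sigma$ matches $\pi$ iff some $G_i$ has a $k$-clique. Theorem~\ref{t.cross} then directly delivers the conclusion.

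The construction should split into two cooperating layers. First, a \emph{selector}: place $t$ disjoint rectangular ``graph blocks'' in the permutation matrix of $\pi$, one per $G_i$, separated by barrier gadgets whose orientation (long monotone runs with specific slopes and heights) is incompatible with certain relative-order patterns in $\sigma$. The pattern $\sigma$ begins with a short \emph{routing prefix} of $O(\log t)$ entries whose internal order cannot span a barrier, so that any matching of the prefix is confined to a single block and the matching location encodes the binary index of the chosen $i$. Second, a \emph{clique verifier} inside each block: encode $G_i$ as a permutation gadget (for instance, one small ladder per vertex, with each non-edge materialized as a local obstruction) so that the remaining \emph{clique-check suffix} of $\sigma$ embeds into the block iff $G_i$ contains a $k$-clique. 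The suffix has length $\mathrm{poly}(n, k)$, keeping $\ell$ within the required $\mathrm{poly}(n + k + \log t)$ bound.

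The main technical obstacle is the forward direction of correctness: showing that a successful embedding of $\sigma$ into $\pi$ truly isolates a single block and cannot cheat by spreading its entries across several blocks or slipping across a barrier between selector and verifier. This is normally enforced by flanking both the routing prefix and the clique suffix with monotone ``wrappers'' whose order relations with the barriers are rigid enough to trap the entire matching inside one block once the prefix has committed. The backward direction is routine: a $k$-clique in $G_i$ directly induces an embedding of the pattern, with the prefix landing on the $i$-th block's signature and the suffix matched against the corresponding vertex-ladders. Once these are checked and the sizes verified, \NP-hardness of \clique\ combined with Theorem~\ref{t.cross} rules out a polynomial kernel for \permpat\ under \assumptioncoNP.
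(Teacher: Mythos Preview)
Your high-level framework is right: apply Theorem~\ref{t.cross} via a cross-composition from \clique. But as written this is a plan rather than a proof. The two pieces you defer --- the ``clique verifier'' gadget encoding $G_i$ so that a fixed pattern of size $\mathrm{poly}(n,k)$ embeds iff $G_i$ has a $k$-clique, and the barrier/selector machinery that confines a match to a single block --- are precisely the nontrivial parts, and you have not constructed either. In particular, ``one small ladder per vertex, with each non-edge materialized as a local obstruction'' is not a reduction; existing \NP-hardness proofs for \permpat\ do not directly yield such a gadget with the size bounds you need, and the paper in fact introduces a new reduction for exactly this reason.

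More importantly, the paper's route shows that your selector layer is unnecessary. The paper defines a single encoding $\pi_z(\cdot)$ of graphs by permutations (separating runs of length $z$ around each vertex, encoding entries for edges) and proves the key Lemma~\ref{l.key}: $K_l\subseteq G$ iff $\pi_z(K_l)$ is a pattern of $\pi_z(G)$, where $z=4n'+4$ and $n'$ is the size of the \emph{largest connected component} of $G$. The cross-composition is then simply: given $(K_l,G_1),\ldots,(K_l,G_t)$ with each $G_i$ connected and $|V(G_i)|=n$, set $G=G_1\sqcup\cdots\sqcup G_t$ and ask whether $\pi_z(K_l)$ is a pattern of $\pi_z(G)$ with $z=4n+4$. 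Because $K_l$ is connected and the encoding places components in disjoint value ranges, any match of $\pi_z(K_l)$ is automatically confined to one $G_i$; no routing prefix, no barriers, and the parameter $|\pi_z(K_l)|\le 2z\,n+n^2$ does not depend on $t$ at all (not even logarithmically). So the ``main technical obstacle'' you flag --- preventing a match from spreading across blocks --- is handled for free by making $z$ scale with component size rather than with $|V(G)|$, which is the real idea you are missing.
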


We prove Theorem \ref{t.main} using Theorem \ref{t.cross}. 
However, this requires a polynomial-time reduction that allows cross-composition without significantly increasing the parameter value.
Reductions described in the literature \cite{Lnp, M} have resisted our attempts to apply the framework. 
Therefore, we introduce a new \NP-hardness proof that directly leads to a cross-composition. 
The new reduction is from the well known \clique\ problem. 

Let us first define encoding $\pi_z(G)$ taking a graph $G$ and $z \in \N$ and producing a permutation. 
The key property of the encoding is that for any $G, H$ we have $G \ss H$ if and only if $\pi_z(G)$ is a pattern of $\pi_z(H)$ for a particular choice of $z$.
This allows us to express the \clique\ problem in terms of \permpat. 
The encoding permutation itself consists of two types of entries: encoding entries and separating entries. 
The former ones encode the edges of $G$. 
The latter form decreasing runs used to separate encoding entries corresponding to different vertices. 
The mapping $\pi_z(\cdot)$ can be seen as an embedding of the upper-triangular submatrix of the adjacency matrix of the input graph into a permutation. 
The separating runs mark where each row and column begins and ends; 
the encoding entries determine where the 1-entries of the matrix are. 
This mapping, along with some notation introduced below, is illustrated in Figure \ref{f.encoding}.

We start constructing $\pi_z(G)$ by imposing a total order on $V(G)$ placing vertices from the same connected component of $G$ consecutively.
Thus, we can assume that $V(G) = [n]$ and set
\begin{align*}
N^+_G(v)    & := \{ u : u > v \land \{u, v\} \in E(G) \}, \\
N^-_G(v)    & := \{ u : u < v \land \{u, v\} \in E(G) \}, \\
\deg^+_G(v) & := |N^+_G(v)|, \\ 
\deg^-_G(v) & := |N^-_G(v)|.
\end{align*}
We call the vertices from $N^+_G(v)$ and $N^-_G(v)$ the right-neighbours and left-neighbours of~$v$, respectively. 
The overall structure of the permutation $\pi_z(G)$ is as follows. 
It starts with a decreasing run of length $z$,
continues with the entries encoding the vertex 1 (i.e., encoding $N^+_G(1)$), 
which is then followed by another decreasing run of length $z$. 
This finishes the part of the permutation dedicated to the vertex 1 and the segment for the vertex 2 begins.
Again, it starts with another decreasing run of length $z$, continues with the encoding entries of $N^+_G(2)$, and is finished by a decreasing run of length $z$. 
This continues for all vertices of $G$. 
Note that for each vertex $v$ there is a pair of decreasing runs immediately surrounding the entries encoding $N^+_G(v)$, one from left and one from right. 
These are called the left and right separating runs of $v$, respectively. 
\begin{figure}[b!]
\centering
\begin{tikzpicture}[scale=0.25,style=thick]

\draw[red!70!black,line width=2pt] (8.5, -1) -- (8.5, 29);
\draw[red!70!black,line width=2pt] (13.5, -1) -- (13.5, 29);
\draw[red!70!black,line width=2pt] (-1, 26.5) -- (29, 26.5);
\draw[red!70!black,line width=2pt] (-1, 21.5) -- (29, 21.5);

\foreach \x/\y in
{{0/5}, {1/4}, {2/3}, {4/2}, {5/1}, {6/0},   {7/12}, {8/11}, {9/10}, {12/8}, {13/7}, {14/6},   {15/19}, {16/18}, {17/17}, {19/15}, {20/14}, {21/13},   {22/27}, {23/26}, {24/25}, {25/22}, {26/21}, {27/20}} 
	\fill[gray!70!white] (\x, \y) rectangle (\x + 1, \y + 1);

\foreach \x/\y in {{3/9}, {10/16}, {11/23}, {18/24}}
	\fill[red!70!black] (\x, \y) rectangle (\x + 1, \y + 1);

\draw (0, 0) grid (28, 28);

\draw[->] (-1, -2.5) node[below=-3] {\footnotesize $p_L(1)$} to[out=60,in=-90,distance=30] (0.5, -0.5);
\draw[->] (3.5, -4.5) node[below=-3] {\footnotesize $p_M(1)$} to (3.5, -0.5);
\draw[->] (6, -2.5) node[below=-3] {\footnotesize $p_R(1)$} to[out=135,in=-90,distance=30] (4.5, -0.5);
\draw[->] (9.5, -4.5) node[below=-3] {\footnotesize $p_L(2)$} to[out=125,in=-90,distance=30] (7.5, -0.5);
\draw[->] (10.5, -2.5) node[below right=-3] {\footnotesize $p_M(2)\hspace{1mm}\cdots$} to[out=135,in=-90,distance=30] (10.5, -0.5);

\matrix [column sep=0,shift={(-4,3.3)},left delimiter={(},right delimiter={)}]
{
\node {0}; & \node{1}; & \node{1}; & \node {0}; \\
\node {0}; & \node{1}; & \node{0}; & \node {}; \\
\node {1}; & \node{0}; & \node{}; & \node {}; \\
\node {0}; & \node{}; & \node{}; & \node {}; \\
};

\node[single arrow, fill=white, anchor=base, draw=black, align=center, text width=1cm] at (-5.4, 13.14) {};

\end{tikzpicture}
\caption[]{
Left part shows the upper triangular submatrix of the adjacency matrix of a graph $G = \big(\{1,2,3,4\},\big\{\{1,2\}, \{2,3\}, \{2,4\}, \{3,4\}\big\}\big)$. 
The right part shows the permutation matrix representation of its encoding permutation $\pi_3(G)$.
The columns of both matrices are indexed from left to right, the rows from bottom to top. 
Thus, the $(1,1)$ entry of the matrix is in the bottom-left corner.
White positions of the grid on the right correspond to 0-entries of the permutation matrix, non-white positions are 1-entries. 
Separating runs are colored in light gray, encoding entries in dark red.
Note the one-to-one correspondence between the 1-entries of the matrix on the left with the encoding entries of the permutation matrix. 
Horizontal red lines represent \textit{values} attained at positions $L(4)$ and $R(4)$. 
Vertical red lines denote \textit{indexes} $L(2)$ and $R(2)$. 
Note that these four red lines induce a rectangle with a single 1-entry. 
This encodes the 1-entry in the top-most position of the second column of the adjacency matrix.
Arrows below the permutation matrix illustrate the notation $p_L(\cdot), p_M(\cdot),$~and~$p_R(\cdot)$. 
}
\label{f.encoding}
\end{figure}
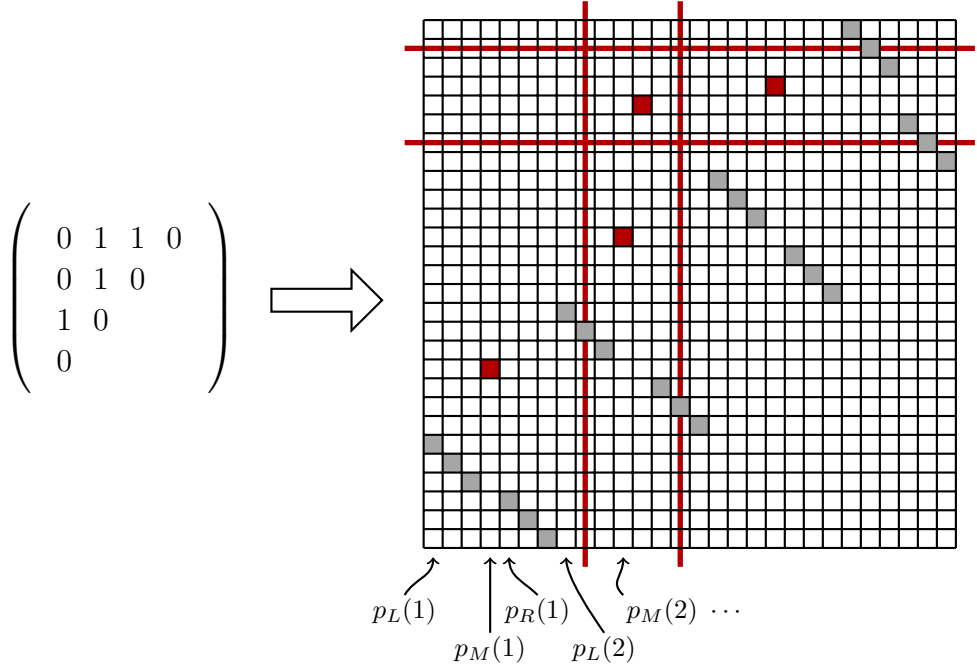


To facilitate the formal definition of $\pi_z(G)$, 
we begin by introducing a notation for important positions and values of the resulting permutation's entries. 
This includes the positions where the abovementioned runs start, the values with which they start, or the positions where the parts encoding $N^+_G(v)$, for individual choices of $v$, start. 

We use $p_L(v)$ and $p_R(v)$ as a shorthand for the positions on which the left and right separating run of $v$ starts, respectively. 
The first position of the segment encoding $N^+_G(v)$ is denoted by $p_M(v)$. 
This is illustrated in Figure \ref{f.encoding}. 
Specifically, we set $p_L(1) := 1, p_M(1) := z + 1,$ and $p_R(1) := z + 1 + \deg^+_G(1)$. 
For $v \ge 2$, we have: 
\begin{align*} 
p_L(v)      & := p_R(v - 1) + z, \\
p_M(v)      & := p_L(v) + z, \\
p_R(v)      & := p_M(v) + \deg^+_G(v). 
\end{align*} 
We also introduce notation for the \textit{values} used by the separating runs. 
The left separating run of $v$ starts at the position $p_L(v)$ with the value $q_L(v)$. 
The right separating run starts at $p_R(v)$ with the value $q_R(v)$. 
Finally, $q_M(v)$ is the least value used 
by the encoding entries of vertices from $N_G^-(v)$ to determine their connection to $v$. 
(Specifically, vertices of $N_G^-(v)$ use the values $[q_M(v), q_M(v) + \deg_G^-(v) - 1]$ to encode this.
If $\deg^-_G(v)$ is zero, the value $q_M(v)$ is actually not used.) 
We set $q_L(1) := 2z, q_M(1) := z + 1,$ and $q_R(1) := z$. 
For $v \ge 2$, let
\begin{align*} 
q_R(v)      & := q_L(v - 1) + z, \\
q_M(v)      & := q_R(v) + 1, \\
q_L(v)      & := q_M(v) + z + \deg^-_G(v) - 1. 
\end{align*} 
We now define the values of $\pi = \pi_z(G)$.
For each $v$, we introduce a decreasing run of length $z$ starting at the position $p_L(v)$: 
\begin{align*} 
\pi(p_L(v)) & := q_L(v), \\
\pi(p_L(v) + 1) & := q_L(v) - 1, \\
\pi(p_L(v) + 2) & := q_L(v) - 2, \\
\ldots \\ 
\pi(p_L(v) + z - 1) & := q_L(v) - (z - 1).
\end{align*}
We also insert a decreasing run which starts at the position $p_R(v)$ with the value $q_R(v)$: 
\begin{align*} 
\pi(p_R(v)) & := q_R(v), \\
\pi(p_R(v) + 1) & := q_R(v) - 1, \\
\pi(p_R(v) + 2) & := q_R(v) - 2, \\
\ldots \\ 
\pi(p_R(v) + z - 1) & := q_R(v) - (z - 1).
\end{align*} 
This establishes the entries represented by gray squares in Figure \ref{f.encoding}. 

The remaining values are used to encode the edges of $G$. 
The neighbourhood $N^+_G(v)$ is encoded by an increasing run on positions $p_M(v), p_M(v) + 1, \ldots, p_M(v) + |N^+_G(v)| - 1$.
We fix a vertex $v \in V(G)$ and iterate through the neighbours $\{u_1, u_2, \ldots, u_k \} = N^+_G(v)$. 
Assume $u_1 < u_2 < \ldots < u_k$. 
For $i \in [k]$, we set: 
\begin{align} 
\label{eq.j}
\pi(p_M(v) + i - 1) := q_M(u_i) + \ell(v, u_i),
\end{align} 
where $\ell(v, u_i) = |\{ w : w < v \land \{w, u_i\} \in E(G) \}|$.
The term $\ell(v, u_i)$ ensures that no value in $\pi$ is repeated. 

The above procedure is carried out for each $v \in V(G)$. 
This finishes the construction of $\pi_z(G)$. 
We now provide two observations. 

\begin{observation} 
For any graph $G$ and $z \in \N$ the function $\pi_z(G)$ is a permutation. 
\end{observation}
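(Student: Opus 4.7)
The plan is to verify that $\pi_z(G)$ is a well-defined bijection from $[L]$ to $[L]$ for $L := 2zn + |E(G)|$, which amounts to checking two things: first, that every position in $[1,L]$ is assigned exactly one value, and second, that the values used are exactly $[1,L]$ without repetition.

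For the positions, I would proceed by induction on $v$ to show that the block devoted to vertex $v$, namely $[p_L(v), p_R(v)+z-1]$, has length $2z+\deg^+_G(v)$ and begins immediately after the block for $v-1$. The definitions cleanly split this block into the left separating run on $[p_L(v), p_M(v)-1]$, the encoding entries on $[p_M(v), p_R(v)-1]$, and the right separating run on $[p_R(v), p_R(v)+z-1]$, and the recurrence $p_L(v) = p_R(v-1) + z$ together with $p_R(v-1) + z - 1$ being the last position of the previous block gives adjacency. Summing lengths yields $\sum_v (2z+\deg^+_G(v)) = 2zn + |E(G)| = L$.

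For the values, I would partition the values assigned to positions in $v$'s block into three groups: the right separating run occupying $[q_R(v)-z+1, q_R(v)]$, the encoding values corresponding to edges incident to $v$ from the left occupying $[q_M(v), q_M(v)+\deg^-_G(v)-1]$, and the left separating run occupying $[q_L(v)-z+1, q_L(v)]$. Using the recurrences $q_M(v) = q_R(v)+1$ and $q_L(v) = q_M(v) + z + \deg^-_G(v) - 1$, these three intervals telescope into the contiguous block $[q_R(v)-z+1, q_L(v)]$, and the recurrence $q_R(v+1) = q_L(v)+z$ guarantees that the next vertex's block starts at $q_L(v)+1$. A short inductive computation using $q_L(v) - q_L(v-1) = 2z + \deg^-_G(v)$ gives $q_L(n) = 2zn + |E(G)| = L$, so the values exhaust $[1,L]$.

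The one point that is not pure bookkeeping, and which I anticipate as the main subtlety, is verifying that the encoding values attached to a fixed vertex $v$ really form the full interval $[q_M(v), q_M(v)+\deg^-_G(v)-1]$. The defining formula $\pi(p_M(w)+i-1) := q_M(u_i) + \ell(w, u_i)$ places values indexed by the right-neighbors of $w$, so I would re-index: fix $v$, let $w$ range over $N^-_G(v)$, and observe that in each such contribution $v$ appears as some $u_i$ with $\ell(w, v) = |\{w' : w' < w,\ \{w',v\} \in E(G)\}|$. This is precisely the rank of $w$ in the sorted list $N^-_G(v)$, so as $w$ varies over $N^-_G(v)$ the offset $\ell(w,v)$ hits each value in $[0, \deg^-_G(v)-1]$ exactly once, establishing both surjectivity onto the claimed interval and injectivity.
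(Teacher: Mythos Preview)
Your argument is correct and follows the same approach as the paper: show the domain and range are both $[2zn+|E(G)|]$ by telescoping the per-vertex position and value blocks, and check injectivity on the encoding values by observing that the offset $\ell(w,v)$ is exactly the rank of $w$ in the sorted list $N^-_G(v)$, so the encoding entries landing in $[q_M(v), q_M(v)+\deg^-_G(v)-1]$ biject onto that interval. One phrasing slip worth fixing: in your value paragraph you say you ``partition the values assigned to positions in $v$'s block,'' but the encoding values in $[q_M(v), q_M(v)+\deg^-_G(v)-1]$ are in fact assigned to positions in the blocks of the vertices $w\in N^-_G(v)$, not in $v$'s own position block --- what you are really partitioning is the value interval $[q_R(v)-z+1, q_L(v)]$, and your third paragraph makes clear you understand this.
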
 

\begin{proof}
Let $\pi := \pi_z(G)$. 
It is straightforward to verify that $\pi$ is a mapping from $[p]$ to $[p]$, for $p = 2zn + |E(G)|$. 
It remains to show that $\pi$ is injective, i.e. 
that there is no pair of distinct indexes $i, j$ such that $\pi(i) = \pi(j)$. 
It is easily seen that such $i$ and $j$ cannot both be an index of an entry forming a separating run, since the separating runs are explicitly constructed so that the sets of their values are disjoint. 
For each vertex $v$ there are exactly $\deg^-_G(v)$ values between the values of its left and right separating run. 
These values are used to encode the $\deg^-_G(v)$ edges connecting $v$ to its left-neighbours. 
The left-most neighbour is using the least value, the subsequent vertices are using values that increase by 1 with each neighbour (cf. the term $\ell(v, u_i)$ in equation (\ref{eq.j})). 
Therefore, we have neither a collision between a separating entry and an encoding entry nor a collision between two entries encoding $N^-_G(v)$ for the same $v$. 
Finally, it can be easily seen that the sets of values encoding $N^-_G(v)$ are pairwise disjoint for different choices of $v$.
\end{proof}

\begin{observation} 
\label{o.ones}
For any choice of $z \in \N$ and any choice of $u, v \in V(G)$, 
there is at most one 1-entry of $\pi_z(G)$ with an index in $[p_M(u), p_R(u) - 1]$ and value in $[q_M(v), q_M(v) + \deg^-_G(v) - 1]$. 
Furthermore, 
there is an edge between vertices $u, v \in V(G), u < v$ if and only if there is exactly one such 1-entry. 
\end{observation}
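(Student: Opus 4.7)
My plan is to work through the observation by first characterizing which entries of $\pi_z(G)$ have an index in $[p_M(u), p_R(u)-1]$, and then determining which of them can have a value in $[q_M(v), q_M(v)+\deg^-_G(v)-1]$.

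First I would note that, by construction, positions in $[p_M(u), p_R(u)-1]$ are precisely the positions where the encoding run for $N^+_G(u)$ lives: the left separating run of $u$ occupies positions $[p_L(u), p_L(u)+z-1]$ and ends at $p_M(u)-1$, while the right separating run of $u$ starts at $p_R(u)$. So any 1-entry of $\pi_z(G)$ with index in $[p_M(u), p_R(u)-1]$ must be one of the encoding entries from equation~(\ref{eq.j}); explicitly, it has the form $\pi(p_M(u)+i-1) = q_M(u_i)+\ell(u,u_i)$, where $u_i$ is a right-neighbour of $u$ and $\ell(u,u_i)\in[0,\deg^-_G(u_i)-1]$.

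Next I would argue that the candidate intervals $[q_M(w), q_M(w)+\deg^-_G(w)-1]$, as $w$ ranges over $V(G)$, are pairwise disjoint and separated from all separating-run values. This follows directly from unwinding the recursions: $q_L(w)=q_M(w)+z+\deg^-_G(w)-1$, so the left separating run of $w$ uses values strictly above $q_M(w)+\deg^-_G(w)-1$, and $q_R(w+1)=q_L(w)+z$, so the next vertex's intervals begin strictly higher still. Consequently, an encoding entry $q_M(u_i)+\ell(u,u_i)$ lies in $[q_M(v), q_M(v)+\deg^-_G(v)-1]$ if and only if $u_i=v$.

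Combining these two steps: a 1-entry satisfying the observation's index and value conditions exists if and only if $v$ occurs among the right-neighbours of $u$, i.e.\ $u<v$ and $\{u,v\}\in E(G)$, and since $v$ can occur at most once in the enumeration $u_1<\cdots<u_k$ of $N^+_G(u)$, at most one such entry exists in any case. The main obstacle is really just bookkeeping—carefully verifying the disjointness of the $q_M$-intervals against the separating-run values using the recursive formulas—so the proof should be short once that arithmetic is in place.
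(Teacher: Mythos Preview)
Your proposal is correct and follows essentially the same approach as the paper: both identify the entries with index in $[p_M(u),p_R(u)-1]$ as precisely the encoding entries of $N^+_G(u)$ from equation~(\ref{eq.j}), observe that $\ell(u,u_i)<\deg^-_G(u_i)$ so each such entry lands in the value interval of the corresponding right-neighbour, and conclude that a qualifying entry exists exactly when $v\in N^+_G(u)$. You are simply more explicit than the paper about verifying the pairwise disjointness of the intervals $[q_M(w),q_M(w)+\deg^-_G(w)-1]$, which the paper leaves implicit.
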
 

\begin{proof}
The entries of $\pi := \pi_z(G)$ with indexes from $[p_M(u), p_R(u) - 1]$ encode the neighbourhood of the vertex $u$, i.e., $N^+_G(u)$. 
For each neighbour $v$ from $N^+_G(u)$, we insert a single entry with value from $[q_M(v), q_M(v) + \deg^-_G(v) - 1]$. 
This is because in equation (\ref{eq.j}) the term $\ell(v, u_i)$ is always strictly less than $\deg^-_G(v)$. 
This implies both parts of the observation. 
\end{proof}

For the purpose of the proof of the lemma below, we define the following: 
\begin{align*}
C(v) & := [p_M(v), p_R(v) - 1], \\ 
L(v) & := p_L(v) + \lfloor \tfrac{z}{2} \rfloor, \\ 
R(v) & := p_R(v) + \lfloor \tfrac{z}{2} \rfloor. 
\end{align*}
Therefore, $C(v)$ is the set of entries of $\pi$ encoding the vertex $v$, 
$L(v)$ denotes the middle entry of the left separating run of $v$, 
and $R(v)$ is the middle entry of the right separating run of $v$. 
Once more, Figure \ref{f.encoding} illustrates the notation. 

The following lemma implies \NP-hardness of \permpat:
\begin{lemma} 
\label{l.key}
For every graph $G$ without isolated vertices, $K_l$ is a subgraph of $G$ if and only if $\pi_z(K_l)$ is a pattern of $\pi_z(G)$, for $z = 4n' + 4,$ where $n'$ is the number of vertices in the largest connected component of $G$. 
\end{lemma}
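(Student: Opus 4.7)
The plan is to split the lemma into the two directions; the forward implication is an explicit construction while the backward one is the technical heart. For the forward direction, given a clique with vertices $w_1 < w_2 < \cdots < w_l$ in $G$, I would define the certifying $\phi$ by mapping each entry of the left (resp.\ right) separating run of vertex $i$ in $\pi_z(K_l)$ to the corresponding entry of the left (resp.\ right) separating run of $w_i$ in $\pi_z(G)$, and mapping the encoding entry of the edge $\{i,j\}$ of $K_l$ (with $i<j$) to the encoding entry of $\{w_i,w_j\}$ in $\pi_z(G)$. That $\phi$ is strictly increasing in index and preserves the relative order of values is a routine case check that follows from the monotonicity in the vertex index of $p_L,p_M,p_R,q_L,q_M,q_R$.

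For the backward direction, from a certifying $\phi$ I want to extract an order-preserving injection $\tau:[l]\to V(G)$ whose image is a clique in $G$. The structural core is the claim: for every $v\in[l]$ there exists $\tilde{v}=\tau(v)\in V(G)$ such that $\phi(L(v))$ lies inside the left separating run of $\tilde{v}$ in $\pi_z(G)$ and $\phi(R(v))$ lies inside the right separating run of the same $\tilde{v}$. To establish this I would analyze what a decreasing subsequence of $\pi_z(G)$ can look like. Each $C(w)$ is internally increasing (the values $q_M(u_i)+\ell(w,u_i)$ are increasing in $i$) and thus contributes at most one entry to any decreasing subsequence; the values $q_L,q_M,q_R$ are arranged so that no decreasing subsequence bridges the separating runs of two distinct vertices of $G$, although it may traverse the left and right runs of the same vertex; and, using the fact that vertices of each connected component of $G$ occupy consecutive indices, any encoding entries participating in a decreasing subsequence together with a separating run must come from a single connected component, bounding their count by $n'-1$. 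Summing, every decreasing subsequence of $\pi_z(G)$ of length $z=4n'+4$ has its middle entry strictly inside a separating run of $\pi_z(G)$. Applying this to the left and right separating runs of $v$ in $\pi_z(K_l)$ places $\phi(L(v))$ and $\phi(R(v))$ inside separating runs of $\pi_z(G)$; the bounded distance between $L(v)$ and $R(v)$ in the pattern combined with the size of $z$ forces both into the left and right runs of a common vertex $\tilde{v}$.

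With the structural claim in hand, the remainder is short. The map $\tau$ is order-preserving and hence injective because $\phi$ is strictly increasing and the separating runs in $\pi_z(G)$ are pairwise disjoint. For each edge $\{u,v\}\in E(K_l)$ with $u<v$, the unique encoding entry for this edge in $\pi_z(K_l)$ has index in $C(u)$ and value in $[q_M(v),q_M(v)+\deg^-_{K_l}(v)-1]$; its image under $\phi$ is sandwiched between $\phi(L(u))$ and $\phi(R(u))$ in index, so its index lies in $C(\tilde{u})$, and similarly sandwiched between the left and right separating runs of $\tilde{v}$ in value, so its value lies in $[q_M(\tilde{v}),q_M(\tilde{v})+\deg^-_G(\tilde{v})-1]$. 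Observation~\ref{o.ones} then yields $\{\tilde{u},\tilde{v}\}\in E(G)$, so the image of $\tau$ is a clique.

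The main obstacle is the structural claim itself and the case analysis behind it: one must keep track of how a length-$z$ decreasing subsequence of $\pi_z(G)$ can split across at most two separating runs and absorb up to $n'-1$ encoding entries, and rule out degenerate embeddings in which separating runs belonging to different vertices of $K_l$ or $G$ get merged into a single structure. The choice $z=4n'+4$ is engineered to leave enough slack both for the encoding-entry ``contamination'' on either side and for the possibility that a pattern separating run straddles a left and a right run of the same vertex of $G$, while still forcing the middle entries to land unambiguously.
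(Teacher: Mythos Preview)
Your proposal is correct and follows essentially the same approach as the paper's proof: bound the encoding-entry content of any decreasing subsequence of $\pi_z(G)$ by $n'$, observe that such a subsequence touches the separating runs of at most one vertex, use this to define a map $V(K_l)\to V(G)$ via the images of $L(v)$ and $R(v)$, and then invoke Observation~\ref{o.ones} to show this map is a homomorphism. The only organizational difference is that the paper treats the concatenation of the left and right separating runs of $v$ as a single decreasing subsequence of length $2z$ in $\sigma$, whose image in $\pi$ must begin with fewer than $\tfrac{1}{4}z$ encoding entries and then stay inside one pair of separating runs---this pins $\phi(L(v))$ and $\phi(R(v))$ to the left and right runs of the same $u$ in one stroke---whereas you handle the two runs separately and then appeal to a distance argument to force a common $\tilde v$; both routes work, the paper's being marginally cleaner.
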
 

\begin{proof}
We let $\sigma := \pi_z(K_l)$ and $\pi := \pi_z(G)$. 

If $G$ contains a clique $K_l$ of size $l$ as a subgraph, then $\pi$ contains the pattern $\sigma$ by construction. 
This is because if we consider the permutation matrix representation of $\pi$ and delete all columns except the ones that correspond to separating and encoding entries for vertices of $K_l \ss G$, we get a matrix that differs from the permutation matrix of $\sigma$ only by the additional presence of columns that encode the connection of the vertices of $K_l$ to the vertices outside of $K_l$. 
By deleting these columns (and the empty rows resulting from the above deletions) we arrive at the permutation matrix representation of $\sigma$, implying $\sigma$ is a pattern of $\pi$. 

For the other direction, assume there is a function $\phi: [|\sigma|] \to [|\pi|]$ certifying the pattern. 
We start by noting that there are no decreasing subsequences of length $\frac{1}{4}z$ in $\pi$ avoiding all separating runs. 
This is because such a sequence contains at most one entry from $C(v)$ for each $v \in V(G)$. 
At the same time, it cannot simultaneously contain an entry from $C(u)$ and $C(v)$ for $u, v$ chosen from different connected components. 
This is because the construction of the encoding permutation places vertices from the same component consecutively and the entries encoding a component placed earlier in the ordering have strictly smaller values than those from a later component. 
This bounds the length of the subsequence by $n' < \frac{1}{4}z$.

Furthermore, {\em any} decreasing subsequence of $\pi$ contains entries from at most one pair of separating runs. 
This is because once the sequence hits a separating run of a vertex $v$, all its subsequent entries can only be from the pair of separating runs of $v$. 
Any decreasing subsequence therefore starts with less than $\frac{1}{4}z$ encoding entries, which are then followed by entries of a pair of separating runs of some vertex. 

We now show that the certifying function $\phi$ naturally leads to a mapping from $V(K_l)$ to $V(G)$. 
Consider any vertex $v \in V(K_l)$. 
The function $\phi$ maps the subsequence of $\sigma$ formed by the pair of separating runs of $v$ to a decreasing subsequence of $\pi$ of the same length. 
As argued, such a long decreasing subsequence starts with less than $\frac{1}{4}z$ encoding entries of $\pi$, which are then followed by at least $\frac{7}{4}z$ entries from a pair of separating runs of some vertex $u \in V(G)$.  
This implies that the middle entry $L(v)$ of the left separating run of $v$ in $\sigma$ needs to be mapped by $\phi$ to the left separating run of $u \in G$. 
Additionally, the middle entry $R(v)$ of the right separating run of $v$ in $\sigma$ needs to be mapped by $\phi$ somewhere in the right separating run of the same vertex $u$.
The above establishes a mapping from $V(K_l)$ to $V(G)$ denoted by $f_\phi$. 

We claim $f_\phi$ to be a graph homomorphism. 
Fix any pair of vertices $v_1, v_2$ of $K_l$ such that $v_1 < v_2$. 
We show that $f_\phi(v_1), f_\phi(v_2)$ are connected by an edge in $G$. 
Since there is an edge between $v_1$ and $v_2$ in $K_l$, Observation \ref{o.ones} implies that the set of values $\sigma[L(v_1), R(v_1)]$ contains precisely one number $p$ with $\sigma(R(v_2)) \le p \le \sigma(L(v_2))$. 
Since $\phi$ certifies the pattern $\sigma$ in $\pi$, there needs to be an entry of $\pi$ with an index between $\phi(L(v_1))$ and $\phi(R(v_1))$ and value between $\pi(\phi(R(v_2)))$ and $\pi(\phi(L(v_2)))$. 
Observation \ref{o.ones} then implies there is an edge between $f_\phi(v_1)$ and $f_\phi(v_2)$. 
Thus, $f_\phi$ is a homomorphism and $G$ contains a clique of size $l$. 
%
\end{proof}

The above reduction can be directly used within the cross-composition framework to show our result: 

\begin{proof}[Proof of Theorem \ref{t.main}]
We set $L$ to be the set of all pairs $(K_l, G)$, where $K_l$ is a clique, $G$ is a connected graph containing $K_l$ as a subgraph. 
It is widely known that deciding $x \in L$ is \NPc. 

We introduce a cross-composition of $L$ into \permpat. 
Let $R$ be an equivalence relation on $\{0, 1\}^*$ with the following properties:  
the binary sequences that are not representing a pair $(K_l, G)$, where $K_l$ is a clique and $G$ a graph, are placed in a single equivalence class designated for malformed input sequences;
a pair of strings representing instances $(K^1, G^1)$ and $(K^2, G^2)$, respectively, is related in $R$ if and only if $|V(K^1)| = |V(K^2)|, |V(G^1)| = |V(G^2)|$. 
Clearly, $R$ is a polynomial equivalence relation. 
For instances $(K_l, G_1), (K_l, G_2), (K_l, G_3), \ldots, (K_l, G_t)$ from the same equivalence class of $R$, we produce an instance of the \permpat\ problem where we ask if $\pi_z(K_l)$ is in $\pi_z(G)$, where $G$ is a disjoint union of graphs $G_1, \ldots, G_t$ and $z$ is set to $4 \cdot |V(G_1)| + 4$. 
Lemma \ref{l.key} shows that the answer to this problem is YES if and only if at least one of the instances $(K_l, G_i)$ belongs to $L$. 
Since the parameter of the pattern matching instance is $|\pi_z(K_l)|$, which can be bounded by $|V(G_i)| \cdot 2z + |V(G_i)|^2$ for any $i$, we can apply Theorem \ref{t.cross}. 
\end{proof}

\section{Conclusion} 

Guillemot and Marx \cite{Glin} have shown that the \permpat\ problem can be solved in $2^{\O(\ell^2 \log \ell)} \cdot n$ time. 
They raised the question of whether a faster \FPT\ algorithm could be obtained and outlined a strategy for achieving this using their notion of decompositions of permutations. 
This relied on the bound from the Stanley-Wilf conjecture not being tight. 
Fox \cite{F} has shown this is not the case. 
(Still, \cite{F} gives an improved $2^{\O(\ell^2)} \cdot n$ algorithm.) 
The non-existence of a polynomial kernel is a further indication of the difficulty of the problem. 

\bibliographystyle{siam}
\bibliography{all}

\end{document}